\documentclass[runningheads]{llncs}
\usepackage[T1]{fontenc}
\usepackage{graphicx}
\usepackage{amsmath}
\usepackage{amssymb}

\spnewtheorem{observation}[theorem]{Observation}{\bfseries}{\itshape}

\newcommand{\OPT}{\mathrm{OPT}}
\newcommand{\val}{\mathrm{val}}
\newcommand{\ALG}{\mathrm{ALG}}

\begin{document}
\title{Maximum Matching-Match: Hardness and Approximation}

\titlerunning{Maximum Matching-Match: Hardness and Approximation}

\author{Ilie Dumitru\inst{1} \and
Adrian Micl\u au\c s\inst{1} \and
Alexandru Popa\inst{1, 2}}
\authorrunning{Dumitru et al.}

\institute{Department of Computer Science, University of Bucharest, Str. Academiei 14, Bucharest, 010014, Romania \and
National Institute for Research and Development in Informatics, Bulevardul Mareșal Alexandru Averescu 8-10, Bucharest, 011555, Romania}
\maketitle              % typeset the header of the contribution

\begin{abstract}
In this paper, we study \textsc{MaxMMP}, an optimization variant of the Matching-Match Puzzle introduced by Iburi and Uehara (FUN 2024). Given a graph, a partial vertex coloring, and a multiset of colored sticks, the goal is to complete the coloring and assign the sticks to graph edges so as to maximize the number of satisfied edges.

We first prove that \textsc{MaxMMP} is APX-hard by an reduction from \textsc{Max-Cut}. The hardness already holds with two colors, no precolored vertices, and only bichromatic sticks. We then give a simple deterministic $\frac{2}{c(c+1)}$-approximation for completely uncolored instances, improving to $\frac{2}{c(c-1)}$ when all sticks are bichromatic.

Next, we obtain a randomized $\frac{1-\frac{1}{e}}{2c}$-approximation for arbitrary instances with $c$ colors by reducing the remaining coloring choices to monotone submodular maximization under a partition matroid. On bipartite graphs, the approximation ratio improves to $\frac{1-\frac{1}{e}}{c}$. For every fixed $c$, we further obtain deterministic $\frac{1}{2c}$ and $\frac{1}{c}$-approximations on general and bipartite graphs, respectively, in time $n^{O(c^2)}$. Finally, for every fixed number of colors, we show that \textsc{MaxMMP} can be solved exactly in time $n^{O(c^2)}$ on trees and on cographs.
\end{abstract}

\keywords{APX-hardness \and Approximation algorithms \and Submodular maximization \and Exact algorithms}

\section{Introduction}
\label{sec:introduction}

\subsection*{Motivation}

Many complex networks are described through local interaction patterns. Network motifs provide a prominent example: small recurring subgraphs are used to characterize structural organization in biological and other complex networks~\cite{MiloEtAl2002}. When vertices carry types, functions, or roles, the distribution of colors within these patterns may contain additional information, motivating the study of colored motifs~\cite{AdamiEtAl2011}.

Motivated by an inverse version of this viewpoint, we consider the problem of realizing a prescribed multiset of colored pairwise interactions on a fixed graph topology. Instead of starting from a colored network and measuring the interactions that occur, we are given the desired multiplicity of each color pair and seek a vertex coloring that realizes these interactions on the graph edges. This viewpoint coincides with the Matching-Match Puzzle introduced by Iburi and Uehara~\cite{IburiUehara2024}.

Exact realization, however, may be impossible. This naturally leads to an optimization question: how much of the prescribed interaction profile can be realized consistently? We study this question through \textsc{MaxMMP}, the maximization version of the Matching-Match Puzzle.

\subsection*{Informal Problem Definition}

An instance consists of a graph $G=(V,E)$, a set of colors $C=\{1,\ldots,c\}$, a partial coloring of the graph vertices, and exactly $|E|$ sticks, each having a color at each endpoint. A stick of type $\{i,j\}$ represents one desired interaction between colors $i$ and $j$.

In the feasibility problem \textsc{FMMP}, all sticks must be assigned bijectively to graph edges so that the colors at their endpoints agree consistently at every graph vertex. Equivalently, one seeks a vertex coloring whose multiset of edge color pairs coincides exactly with the prescribed multiset of stick types.

In \textsc{MaxMMP}, exact consistency is not required on every edge. We choose a final vertex coloring together with a bijection between sticks and graph edges and maximize the number of edges on which the assigned stick agrees with the colors of both endpoints. Thus, \textsc{MaxMMP} asks for a coloring and assignment that realize as large a portion as possible of the prescribed colored interaction profile.

\subsection*{Previous and Related Work}

Iburi and Uehara introduced the Matching-Match Puzzle and initiated the study of its feasibility problem~\cite{IburiUehara2024}. They proved NP-completeness on paths, cycles, and spiders, and gave polynomial-time algorithms for several restricted graph classes, including complete graphs and stars. Their work concerns exact feasibility. To the best of our knowledge, the optimization variant studied here has not previously been considered.

Our hardness result uses \textsc{Max-Cut}, one of the classical graph optimization problems. In particular, \textsc{Max-Cut} remains APX-complete on cubic graphs~\cite{AlimontiKann2000}. We use an objective-preserving reduction from this restricted case to establish APX-hardness of \textsc{MaxMMP} even with only two colors and a single bichromatic stick type.

Our approximation algorithms are related to monotone submodular maximization under matroid constraints. Fisher, Nemhauser, and Wolsey established the classical $\frac{1}{2}$ guarantee for the greedy algorithm under a matroid constraint~\cite{FisherNemhauserWolsey1978}. C\u{a}linescu, Chekuri, P'al, and Vondr'ak later obtained the $1-\frac{1}{e}$ approximation using the continuous-greedy framework~\cite{CalinescuEtAl2011}. We apply these results after showing that, once the colors on one side of a bipartition are fixed, the remaining coloring choices induce a monotone submodular objective under a partition matroid.

For fixed $c$, we also study a different algorithmic regime in which the remaining coloring problem can be solved exactly by dynamic programming over edge-type histograms. This yields stronger deterministic approximation guarantees at the cost of an $n^{O(c^2)}$ running time, and also leads to exact algorithms on trees and cographs.

\subsection*{Our Results}

We initiate the study of \textsc{MaxMMP} and obtain hardness, approximation, and exact algorithmic results.

\paragraph{APX-hardness.}
We first prove that \textsc{MaxMMP} is APX-hard by an objective-preserving reduction from \textsc{Max-Cut}. The hardness already holds when $c=2$, no graph vertex is precolored, and every stick has type $\{1,2\}$. Under these restrictions, a satisfied stick corresponds exactly to a bichromatic graph edge, so the value of every \textsc{MaxMMP} solution coincides with the size of the corresponding cut. Since \textsc{Max-Cut} is APX-complete on cubic graphs~\cite{AlimontiKann2000}, the hardness persists even when the input graph is cubic.

\paragraph{A simple deterministic approximation.}
We next give an deterministic approximation for completely uncolored instances. By selecting a most frequent stick type and coloring the graph accordingly, we obtain a polynomial-time $\frac{2}{c(c+1)}$-approximation. When every stick is bichromatic, the guarantee improves to $\frac{2}{c(c-1)}$.

\paragraph{Randomized $O(\frac{1}{c})$ approximations.}
For arbitrary precoloring, we give a randomized $\frac{1-\frac{1}{e}}{2c}$-approximation on general graphs. The algorithm fixes the colors on one side of a random bipartition and formulates the remaining choices as monotone submodular maximization under a partition matroid. On bipartite graphs, the random bipartition is unnecessary and the ratio improves to $\frac{1-\frac{1}{e}}{c}$.

\paragraph{Deterministic approximations for fixed $c$.}
For every fixed $c$, we obtain stronger deterministic guarantees in time $n^{O(c^2)}$: a $\frac{1}{2c}$-approximation on general graphs and a $\frac{1}{c}$-approximation on bipartite graphs.

\paragraph{Exact algorithms on trees and cographs.}
Finally, for every fixed $c$, we solve \textsc{MaxMMP} exactly in time $n^{O(c^2)}$ on trees and on cographs, even under arbitrary precoloring.

\section{Preliminaries}
\label{sec:preliminaries}

We consider finite simple undirected graphs $G=(V,E)$, with $n=|V|$ and $m=|E|$. The set of colors is $C=\{1,\ldots,c\}$. An instance contains exactly $m$ sticks, each having one color at each endpoint. We identify a stick with its unordered color type $\{i,j\}$, where $1\leq i\leq j\leq c$, and write $s_{ij}$ for the number of sticks of type $\{i,j\}$.

A partial coloring is a function $\mathcal C_0:V \rightarrow \{0,1,\ldots,c\}$, where color $0$ denotes an initially uncolored vertex. A final coloring $\phi:V\rightarrow C$ extends $\mathcal C_0$ if $\phi(v)=\mathcal C_0(v)$ whenever $\mathcal C_0(v)\neq0$.

For a final coloring $\phi$, let $e_{ij}(\phi)$ denote the number of graph edges whose endpoint colors form the unordered pair $\{i,j\}$.

\begin{problem}[Feasible Matching-Match Puzzle (\textsc{FMMP})]
Given a graph $G=(V,E)$, a collection $S$ of $|E|$ colored sticks using colors from $C=\{1,\ldots,c\}$, and a partial coloring $\mathcal C_0:V\rightarrow \{0,1,\ldots,c\}$, determine whether there exists a final coloring $\phi:V\rightarrow C$ extending $\mathcal C_0$ and a bijection between $E$ and $S$ such that every assigned stick agrees with the colors of the endpoints of its edge.
\end{problem}

Since $|S|=|E|$, feasibility is equivalent to the existence of a coloring $\phi$ extending $\mathcal C_0$ such that $e_{ij}(\phi)=s_{ij}$ for every $1\leq i\leq j\leq c$.

We study the corresponding optimization problem. For a final coloring $\phi$ and a bijection $\pi:E\rightarrow S$, an edge $uv\in E$ is \emph{satisfied} if the stick $\pi(uv)$ can be oriented so that its endpoint colors agree with $\phi(u)$ and $\phi(v)$. Let $\val(\phi,\pi)$ denote the number of satisfied edges.

\begin{problem}[Maximum Matching-Match Puzzle (\textsc{MaxMMP})]
Given a graph $G=(V,E)$, a collection $S$ of $|E|$ colored sticks using colors from $C=\{1,\ldots,c\}$, and a partial coloring $\mathcal C_0:V\rightarrow\{0,1,\ldots,c\}$, find a final coloring $\phi:V\rightarrow C$ extending $\mathcal C_0$ and a bijection $\pi:E\rightarrow S$ maximizing $\val(\phi,\pi)$.
\end{problem}

We write $\OPT$ for the optimum value of a \textsc{MaxMMP} instance. The input is not required to be feasible for \textsc{FMMP}. Otherwise every instance would have optimum $m$.

The following observation will be used throughout the paper.

\begin{lemma}[Histogram lemma]
\label{lem:histogram}
For every final coloring $\phi$ extending $\mathcal C_0$, the maximum number of satisfied edges over all bijections is $\displaystyle\sum_{1\leq i\leq j\leq c}\min{s_{ij},e_{ij}(\phi)}$.
\end{lemma}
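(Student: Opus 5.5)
We prove matching upper and lower bounds. The key observation is that, once $\phi$ is fixed, whether an edge is satisfied depends only on the unordered color pair of its endpoints and on the type of the stick assigned to it. Partition $E$ into classes $E_{ij}(\phi)$, the edges whose endpoint colors form $\{i,j\}$, so that $|E_{ij}(\phi)|=e_{ij}(\phi)$. Partition $S$ into classes $S_{ij}$, the sticks of type $\{i,j\}$, so that $|S_{ij}|=s_{ij}$. An edge $uv\in E_{ij}(\phi)$ is satisfied by $\pi$ exactly when $\pi(uv)\in S_{ij}$. Indeed, a stick of type $\{k,l\}$ can be oriented to match $(\phi(u),\phi(v))$ iff $\{k,l\}=\{\phi(u),\phi(v)\}$ as multisets, and this includes the monochromatic case $i=j$.

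\emph{Upper bound.} Fix any bijection $\pi$. By the observation, the satisfied edges inside $E_{ij}(\phi)$ are mapped injectively by $\pi$ into $S_{ij}$. Their number is therefore at most $\min\{s_{ij},e_{ij}(\phi)\}$. The classes $E_{ij}(\phi)$ are disjoint and cover $E$, so summing over all $1\le i\le j\le c$ gives $\val(\phi,\pi)\le\sum_{i\le j}\min\{s_{ij},e_{ij}(\phi)\}$.

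\emph{Lower bound (construction).} For each pair $\{i,j\}$, choose $t_{ij}=\min\{s_{ij},e_{ij}(\phi)\}$ edges of $E_{ij}(\phi)$ and match them bijectively to $t_{ij}$ sticks of $S_{ij}$. These partial maps have disjoint domains and disjoint ranges, so together they form an injection from some edges to some sticks, with exactly $\sum t_{ij}$ edges satisfied. The numbers of unused edges and unused sticks are both $m-\sum t_{ij}$, since $|E|=|S|=m$. Any bijection between the unused edges and the unused sticks therefore extends the partial map to a full bijection $\pi:E\to S$. Extra satisfied edges can only increase the value, so $\val(\phi,\pi)\ge\sum t_{ij}$. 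Combined with the upper bound, this value equals $\sum t_{ij}$ and is the maximum.

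There is no real obstacle here. The only points needing care are that the orientation freedom makes the satisfaction condition depend exactly on the unordered type, including monochromatic sticks, and that the leftover edges and sticks have equal cardinality. The second point is exactly where the hypothesis $|S|=|E|$ is used.
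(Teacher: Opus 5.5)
Your proof is correct and follows the same route as the paper: a per-type upper bound of $\min\{s_{ij},e_{ij}(\phi)\}$, attained by pairing same-type sticks with edges and assigning the leftover sticks arbitrarily. You are more explicit than the paper on two points: the injectivity of $\pi$ on each class, and the use of $|S|=|E|$ to guarantee that the leftover edges and sticks can be matched bijectively.
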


\begin{proof}
For each type $\{i,j\}$, at most $s_{ij}$ sticks and at most $e_{ij}(\phi)$ graph edges can be matched correctly, so at most $\min\{s_{ij},e_{ij}(\phi)\}$ edges of that type can be satisfied. This bound is attained independently for every type by pairing as many sticks as possible with graph edges of the same type and assigning the remaining sticks arbitrarily to the remaining edges.
\end{proof}

\section{\textsc{MaxMMP} is APX-hard}
\label{sec:maxcut}

In this section we prove that \textsc{MaxMMP} is APX-hard by a reduction from \textsc{Max-Cut}. In fact, when there are only two colors and every stick is bichromatic, the two optimization problems coincide exactly.

\begin{theorem}
\label{thm:maxcut-exact}
\textsc{MaxMMP} is APX-hard, even when $c=2$, no vertex is precolored, and every stick has type $\{1,2\}$.
\end{theorem}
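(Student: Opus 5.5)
The plan is an objective-preserving (L-)reduction from \textsc{Max-Cut} on cubic graphs, which is APX-complete~\cite{AlimontiKann2000}. Given a cubic graph $G=(V,E)$ with $m=|E|$, build the \textsc{MaxMMP} instance on the same graph $G$. Use $c=2$, leave every vertex uncolored ($\mathcal C_0\equiv 0$), and take $S$ to consist of $m$ sticks, all of type $\{1,2\}$. Thus $s_{12}=m$ and $s_{11}=s_{22}=0$. The construction is clearly polynomial.

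Next, relate the two objectives using Lemma~\ref{lem:histogram}. Final colorings $\phi:V\to\{1,2\}$ correspond bijectively to cuts $(\phi^{-1}(1),\phi^{-1}(2))$ of $G$, and $e_{12}(\phi)$ is exactly the size of this cut. For a fixed $\phi$, the best bijection therefore achieves
\[
\min\{0,e_{11}(\phi)\}+\min\{0,e_{22}(\phi)\}+\min\{m,e_{12}(\phi)\}=e_{12}(\phi),
\]
since $e_{12}(\phi)\le m$. It follows that $\OPT_{\textsc{MaxMMP}}=\OPT_{\textsc{Max-Cut}}$. Moreover, any \textsc{MaxMMP} solution $(\phi,\pi)$ has $\val(\phi,\pi)\le e_{12}(\phi)$, because only bichromatic edges can be satisfied. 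So mapping a solution back to the cut induced by $\phi$ gives a cut of value at least $\val(\phi,\pi)$.

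Finally, conclude the reduction. A $\rho$-approximation for \textsc{MaxMMP} on these instances yields a $\rho$-approximate cut, so this is an L-reduction with both constants equal to $1$ (indeed an approximation-preserving reduction). APX-hardness of \textsc{Max-Cut} on cubic graphs then transfers, with $c=2$, no precoloring, and all sticks of type $\{1,2\}$. No step is a real obstacle. The only point needing care is the back-mapping: the solution's bijection $\pi$ may be suboptimal, and we must still argue that the induced cut is at least as large as $\val(\phi,\pi)$, which is exactly the inequality above.
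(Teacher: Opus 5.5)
Your proposal is correct and matches the paper's proof: the same instance on the cubic graph with $c=2$, no precoloring and all sticks of type $\{1,2\}$, colorings identified with cuts, and an $L$-reduction with $\alpha=\beta=1$. Your caution about a suboptimal bijection $\pi$ is harmless but unnecessary, because all sticks are identical. Every bijection therefore satisfies exactly the bichromatic edges, so $\val(\phi,\pi)=e_{12}(\phi)$ for all $\pi$, and this equality is how the paper phrases it.
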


\begin{proof}
We give an objective-preserving reduction from \textsc{Max-Cut}. Let $G=(V,E)$ be an arbitrary graph and construct a \textsc{MaxMMP} instance on the same graph with colors $1$ and $2$, no precolored vertices, and one stick of type $\{1,2\}$ for every edge of $G$.

Consider any cut $(A,V\setminus A)$ of $G$. Color every vertex of $A$ with color $1$ and every vertex of $V\setminus A$ with color $2$. An edge is bichromatic if and only if it crosses the cut. Since every stick has type $\{1,2\}$, exactly the bichromatic edges can be satisfied. Hence the resulting \textsc{MaxMMP} solution has value equal to the size of the cut.

Conversely, consider any final coloring of the \textsc{MaxMMP} instance. Since only colors $1$ and $2$ are available, the coloring defines the cut $(\phi^{-1}(1),\phi^{-1}(2))$. Every stick has type $\{1,2\}$, so an edge is satisfied if and only if its endpoints receive different colors. Thus the number of satisfied edges is exactly the number of edges crossing the corresponding cut.

Therefore every solution of either problem corresponds to a solution of the other with the same objective value, and $\OPT_{\textsc{MaxMMP}}(G)=\OPT_{\textsc{Max-Cut}}(G)$. The reduction is thus an $L$-reduction with $\alpha=\beta=1$. Since \textsc{Max-Cut} is APX-complete on cubic graphs~\cite{AlimontiKann2000}, the claim follows. Thus, \textsc{MaxMMP} admits no PTAS unless $\mathrm{P}=\mathrm{NP}$.
\end{proof}

\section{A Simple Deterministic Approximation}
\label{sec:elementary-approx}

We next give a simple deterministic approximation algorithm for completely uncolored instances. The idea is to focus only on the most frequent stick type. This gives a polynomial-time approximation ratio of order $\Theta(\frac{1}{c^2})$ and provides a useful baseline for the stronger $\Theta(\frac{1}{c})$-type guarantees developed in the next section.

\begin{theorem}
\label{thm:elementary-approx}
If no vertex is precolored, then \textsc{MaxMMP} admits a polynomial-time $\frac{2}{c(c+1)}$-approximation. If every stick is bichromatic and $c\geq 3$, the approximation ratio improves to $\frac{2}{c(c-1)}$.
\end{theorem}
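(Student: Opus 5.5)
Let $\{a,b\}$ be a most frequent stick type, with $s_{ab}$ sticks. There are $\binom{c}{2}+c=\frac{c(c+1)}{2}$ unordered types in total. Since the $s_{ij}$ sum to $m$, the pigeonhole principle gives $s_{ab}\ge \frac{2m}{c(c+1)}$. If every stick is bichromatic, only the $\binom{c}{2}=\frac{c(c-1)}{2}$ types with $i<j$ can occur, so $s_{ab}\ge \frac{2m}{c(c-1)}$. By Lemma~\ref{lem:histogram}, $\OPT\le \sum_{i\le j}s_{ij}=m$ for every coloring. It is therefore enough to find, in polynomial time, a coloring $\phi$ with $e_{ab}(\phi)\ge s_{ab}$. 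Lemma~\ref{lem:histogram} then gives value at least $\min\{s_{ab},e_{ab}(\phi)\}=s_{ab}$, and hence the claimed ratio.

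\emph{Monochromatic case $a=b$.} No vertex is precolored, so I would color every vertex with $a$. Then $e_{aa}(\phi)=m\ge s_{aa}$, and the algorithm attains value $s_{aa}$. This can only happen in the first claim.

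\emph{Bichromatic case $a\ne b$.} This is the main obstacle: a 2-coloring with colors $a,b$ can make all $m$ edges of type $\{a,b\}$ only when $G$ is bipartite. I would not aim for $e_{ab}\ge s_{ab}$ directly. Instead I would use a cut argument. A greedy or local-search max-cut, or one that colors the vertices one at a time with the color opposite to the majority of their already-colored neighbours, gives a 2-coloring with at least $m/2$ bichromatic edges. Coloring the two sides with $a$ and $b$ yields value at least $\min\{s_{ab},m/2\}$. If $s_{ab}\le m/2$, this equals $s_{ab}$ and we are done. Otherwise the value is at least $m/2$, and that is already at least $\frac{2}{c(c+1)}m$ for $c\ge 1$, and at least $\frac{2}{c(c-1)}m$ for $c\ge 3$, since then $\frac{2}{c(c-1)}\le\frac13<\frac12$. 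This is exactly where the hypothesis $c\ge3$ enters the second claim.

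For $c=2$ in the first claim, $\frac{2}{c(c+1)}=\frac13\le\frac12$, so the case split still works. So in every case the value is at least $\min\{s_{ab},m/2\}\ge \frac{2}{c(c+1)}m\ge \frac{2}{c(c+1)}\OPT$, or respectively at least $\frac{2}{c(c-1)}\OPT$. Finally, I would construct the bijection explicitly as in the proof of Lemma~\ref{lem:histogram}: match sticks of type $\{a,b\}$ to edges of type $\{a,b\}$ and assign all other sticks arbitrarily. Every step runs in polynomial time.
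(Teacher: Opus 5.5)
Your proposal is correct and follows essentially the same route as the paper: you choose a most frequent stick type by pigeonhole over the $\binom{c+1}{2}$ (resp.\ $\binom{c}{2}$) types, color every vertex with $a$ in the monochromatic case, and otherwise $2$-color a greedy cut with at least $m/2$ crossing edges, so that the value is at least $\min\{s_{ab},m/2\}\ge m/q$. The only slip is the aside ``for $c\ge1$'', which should read $c\ge2$, since for $c=1$ one has $\frac{2}{c(c+1)}=1>\frac12$; this is harmless because the case $a\ne b$ already forces $c\ge2$.
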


\begin{proof}
Let $G=(V,E)$ be an instance with $m=|E|$. There are $q=\binom{c+1}{2}$ possible unordered stick types. Since the instance contains exactly $m$ sticks, some type $t$ occurs at least $\frac{m}{q}$ times.

Suppose first that $t=\{i,i\}$ is monochromatic. Color every vertex of $G$ with color $i$. Then every graph edge has type $\{i,i\}$, and therefore all $s_{ii}$ sticks of type $\{i,i\}$ can be satisfied. Hence the resulting solution has value at least $\frac{m}{q}$.

Now suppose that $t=\{i,j\}$ with $i\neq j$. Compute a cut $(A,V\setminus A)$ containing at least $\frac{m}{2}$ edges. Such a cut can be found deterministically by processing the vertices in an arbitrary order and placing each vertex on the side that maximizes the number of crossing edges to already processed vertices. At least half of the edges to previously processed vertices become crossing, and every edge is considered exactly once, so the resulting cut has size at least $\frac{m}{2}$. Color every vertex in $A$ with $i$ and every vertex in $V\setminus A$ with $j$. Every crossing edge then has type $\{i,j\}$. Since $q\geq2$ for $c\geq2$, we have $\frac{m}{q}\leq\frac{m}{2}$. Moreover, $s_{ij}\geq\frac{m}{q}$. Therefore at least $\frac{m}{q}$ edges can be satisfied with sticks of type $\{i,j\}$.

In either case, the algorithm produces a solution of value at least $\frac{m}{q}$. Since $\OPT\leq m$, its approximation ratio is at least $\frac{1}{q}=\frac{2}{c(c+1)}$.

Now assume that every stick is bichromatic. There are only $q=\binom{c}{2}$ possible stick types, so some type $\{i,j\}$ occurs at least $\frac{m}{q}$ times. For $c\geq3$, we have $q\geq3$ and therefore $\frac{m}{q}\leq\frac{m}{2}$. Using a cut with at least $\frac{m}{2}$ crossing edges and coloring its two sides with $i$ and $j$ yields a solution of value at least $\frac{m}{q}$. Hence the approximation ratio is at least $\frac{1}{q}=\frac{2}{c(c-1)}$.
\end{proof}

\begin{remark}
The bichromatic guarantee $\frac{2}{c(c-1)}$ applies only for $c\geq3$. When $c=2$, every stick has type $\{1,2\}$, and the problem is exactly \textsc{Max-Cut} by Theorem~\ref{thm:maxcut-exact}.
\end{remark}

The guarantee of Theorem~\ref{thm:elementary-approx} is only of order $O(\frac{1}{c^2})$. In the next section, we improve this dependence to order $O(\frac{1}{c})$ by exploiting monotone submodular maximization.

\section{$\Theta(\frac{1}{c})$ Approximations}
\label{sec:submodular-approx}

We now improve the $\Theta(\frac{1}{c^2})$ guarantee of Theorem~\ref{thm:elementary-approx} to a guarantee of order $\Theta(\frac{1}{c})$. Unlike the elementary approximation, the results in this section allow arbitrary precoloring.

The main idea is to fix the colors on one side of a bipartition and optimize the colors on the other side. Once the first side is fixed, the contribution of the remaining coloring choices can be expressed as a monotone submodular function subject to a partition matroid constraint. We then apply the $1-\frac{1}{e}$ approximation of C\u{a}linescu, Chekuri, P'al, and Vondr'ak~\cite{CalinescuEtAl2011}.

\begin{theorem}
\label{thm:general-approx}
\textsc{MaxMMP} admits a randomized $\frac{1-\frac{1}{e}}{2c}$-approximation on general graphs, even with arbitrary precoloring.
\end{theorem}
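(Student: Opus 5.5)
The plan is to combine a random bipartition, a guess of a single color for the uncolored vertices on one side, and monotone submodular maximization under a partition matroid for the other side. Let $\phi^*$ be an optimal coloring, and fix an optimal bijection achieving $\OPT$. By Lemma~\ref{lem:histogram}, $\OPT=\sum_{i\le j}\min\{s_{ij},e_{ij}(\phi^*)\}$. We may take the satisfied set to be some $F^*\subseteq E$ containing, for every type $\{i,j\}$, at most $\min\{s_{ij},e_{ij}(\phi^*)\}$ edges of that type under $\phi^*$.

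\textbf{Step 1: random bipartition.} Put every vertex independently into $L$ or $R$ with probability $\frac12$. Each edge crosses with probability $\frac12$. Hence the set $F^*_\times$ of edges of $F^*$ between $L$ and $R$ satisfies $\mathbb E|F^*_\times|=\OPT/2$.

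\textbf{Step 2: fixing the colors on $L$.} For each $k\in C$, let $\psi_k$ color $L$ as follows: precolored vertices keep their color, and every uncolored vertex of $L$ gets $k$. Partition $F^*_\times$ into classes $F^*_\times(k)$ according to $\phi^*$ of the $L$-endpoint. Under the partial coloring $\psi_k\cup\phi^*|_R$, every edge of $F^*_\times(k)$ keeps the type it has under $\phi^*$, whether its $L$-endpoint is precolored or colored $k$.

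The number of such edges of type $\{i,j\}$ is at most $s_{ij}$, because they lie in $F^*$. It is also at most the number of $L$--$R$ edges of type $\{i,j\}$ under this coloring. So, counting only $L$--$R$ edges, the coloring $\phi^*|_R$ of $R$ has histogram value at least $|F^*_\times(k)|$. Since $\sum_k|F^*_\times(k)|=|F^*_\times|$, some $k$ yields at least $|F^*_\times|/c$. The algorithm tries all $c$ values of $k$ and keeps the best result.

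\textbf{Step 3: submodular formulation for fixed $L$ and $k$.} Take the ground set $\{(v,a):v\in R, a\in C\}$. Precolored vertices of $R$ only get their forced pair. The constraint is the partition matroid allowing at most one pair per vertex.

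For a set $X$ of pairs, let $x_{ij}(X)$ count pairs $(v,a)\in X$ and $L$-neighbors $u$ of $v$ with $\{a,\psi_k(u)\}=\{i,j\}$. This count is modular and nonnegative. Define $f(X)=\sum_{i\le j}\min\{s_{ij},x_{ij}(X)\}$. A concave nondecreasing function of a nonnegative modular function is monotone submodular, so $f$ is monotone submodular.

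Any independent $X$ extends to a full coloring, giving uncovered vertices arbitrary colors. The true counts satisfy $e_{ij}\ge x_{ij}(X)$, since $L$--$L$ and $R$--$R$ edges only add. Therefore, by Lemma~\ref{lem:histogram}, the resulting solution has value at least $f(X)$.

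Applying the continuous greedy algorithm with rounding of C\u{a}linescu et al.~\cite{CalinescuEtAl2011} gives $\mathbb E f(X)\ge(1-\frac1e)\max f\ge(1-\frac1e)|F^*_\times(k)|$. For the best $k$, this is at least $(1-\frac1e)|F^*_\times|/c$. Taking the expectation over the bipartition gives $\frac{1-1/e}{2c}\OPT$.

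The main obstacle is Step 2. It requires a correct averaging argument over the guessed color $k$, including the fact that precolored $L$-vertices keep the types of their edges for every $k$. It also requires checking that restricting to $F^*$ keeps each per-type count below $s_{ij}$, so that $\max f$ really dominates $|F^*_\times(k)|$. The submodularity and the lower-bound property of $f$ are routine by comparison.
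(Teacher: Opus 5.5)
Your proof is correct. It shares the paper's architecture: a uniformly random bipartition, a partition matroid over options $(v,a)$ with $v\in R$, the objective $\sum_t\min\{s_t,h_t\}$, the histogram lemma to relate it to the final value, and the $(1-\frac{1}{e})$ continuous-greedy guarantee.

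The one genuine difference is how you color the unprecolored vertices of $L$.

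\begin{itemize}
\item The paper gives each of them an independent uniformly random color. Each optimal satisfied edge then survives with probability at least $\frac{1}{2c}$, and linearity of expectation finishes the argument with a single submodular run.
\item You instead try all $c$ constant colorings $\psi_k$. You then apply pigeonhole to the partition of $F^*_\times$ by the $\phi^*$-color of the $L$-endpoint, which is the device the paper uses in its deterministic fixed-$c$ theorem. Your check that precolored $L$-endpoints land in the class $k=\phi^*(u)$ and keep their edge types is exactly what makes this work.
\end{itemize}

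Your route costs $c$ submodular runs instead of one, which is still polynomial when $c$ is part of the input. It gives the same ratio and removes the randomness from the coloring of $L$. Combined with the paper's deterministic family of $2^r\le 2n$ cuts, averaging first over cuts and then over $k$, the rounding step of the continuous greedy would be the only randomized component left. The paper's version is a single run with a one-line per-edge survival analysis.
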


\begin{proof}
Fix an optimal solution $(\phi^*,\pi^*)$ of value $\OPT$.

Independently place every vertex into one of two sets $L$ and $R$ with probability $\frac{1}{2}$. Every precolored vertex in $L$ keeps its prescribed color, while every unprecolored vertex in $L$ receives a color chosen independently and uniformly from $C$. We ignore edges with both endpoints in $L$ or both endpoints in $R$ and optimize only over the crossing edges.

For every vertex $v\in R$ and every color $a$ allowed at $v$, create an option $(v,a)$. If $v$ is precolored, only its prescribed color is allowed. The options are partitioned by vertex, and an independent set contains at most one color option for each vertex. Thus the feasible choices form a partition matroid.

For every stick type $t$ and option $(v,a)$, let $w_{v,a,t}$ be the number of neighbors $u\in L$ for which the crossing edge $uv$ has type $t$ when $v$ receives color $a$. For a set $A$ of selected options, define
$h_t(A)=\displaystyle\sum_{(v,a)\in A} w_{v,a,t}$
and
$F(A)=\displaystyle\sum_t \min\{s_t,h_t(A)\}$.

Each function $h_t$ is nonnegative and modular. Hence $A\mapsto\min\{s_t,h_t(A)\}$ is monotone and submodular, and therefore so is $F$.

If $A$ contains one color choice for every vertex of $R$, then $h_t(A)$ is exactly the number of crossing edges of type $t$. By Lemma~\ref{lem:histogram}, $F(A)$ is the maximum number of crossing edges that can be satisfied under the resulting coloring. Since $F$ is monotone, a partial independent set can be completed by assigning arbitrary allowed colors to the remaining vertices without decreasing its value.

We next compare the optimum of this submodular problem with the original optimum. Call an edge satisfied by $(\phi^*,\pi^*)$ \emph{surviving} if it crosses the random partition and its endpoint in $L$ receives the same color as under $\phi^*$.

Consider an edge satisfied by the optimum. It crosses $(L,R)$ with probability $\frac{1}{2}$. Conditioned on crossing, if its endpoint in $L$ was initially uncolored, it receives its color under $\phi^*$ with probability $\frac{1}{c}$. If it was precolored, its prescribed color already agrees with $\phi^*$. Thus every edge satisfied by the optimum survives with probability at least $\frac{1}{2c}$.

Let $Z$ be the number of surviving optimal edges. By linearity of expectation,
$\mathbb{E}[Z]\geq\frac{\OPT}{2c}$.

Now fix the random partition and the colors assigned to $L$, and color every vertex of $R$ according to $\phi^*$. For every stick type $t$, let $Z_t$ be the number of surviving optimal edges of type $t$. Since all these edges are satisfied in the global optimum, $Z_t\leq s_t$. Moreover, under the coloring induced by $\phi^*$, there are at least $Z_t$ crossing edges of type $t$. Hence the corresponding second-stage solution has value at least
$\displaystyle \displaystyle\sum_t Z_t=Z$.

Therefore the optimum value of the submodular instance is at least $Z$. Applying the $1-\frac{1}{e}$ approximation for monotone submodular maximization under a matroid constraint gives, conditioned on the first-stage random choices, expected value at least $(1-\frac{1}{e})Z$. Taking expectation over the first stage,
$\mathbb{E}[\ALG]\geq(1-\frac{1}{e})\mathbb{E}[Z]\geq\frac{1-\frac{1}{e}}{2c}\OPT$. Thus, the algorithm is a randomized $\frac{1-\frac{1}{e}}{2c}$-approximation.
\end{proof}

The factor $\frac{1}{2}$ in Theorem~\ref{thm:general-approx} arises only from the probability that an edge crosses the random bipartition. For bipartite graphs, the given bipartition separates every edge, and this loss disappears.

\begin{corollary}
\label{cor:bipartite-approx}
On bipartite graphs, \textsc{MaxMMP} admits a randomized $\frac{1-\frac{1}{e}}{c}$ approximation, even with arbitrary precoloring.
\end{corollary}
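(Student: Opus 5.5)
We rerun the proof of Theorem~\ref{thm:general-approx}, replacing the random bipartition $(L,R)$ by the bipartition $(V_1,V_2)$ of $G$, which can be computed in linear time (for instance by BFS). We take $L=V_1$ and $R=V_2$. As before, every precolored vertex of $L$ keeps its prescribed color, and every unprecolored vertex of $L$ receives a color chosen independently and uniformly from $C$. Since $G$ is bipartite, every edge crosses $(L,R)$, so no edge is discarded. The second stage is then exactly as before. We build the partition matroid of options $(v,a)$ for $v\in R$, with $a$ ranging over the colors allowed at $v$, and maximize the function $F(A)=\sum_t\min\{s_t,h_t(A)\}$. Here each $h_t$ is the modular count of edges of type $t$ induced by the chosen options. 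For any complete choice of options, $F$ equals the value of the resulting coloring by Lemma~\ref{lem:histogram}. Because $F$ is monotone, a partial independent set returned by the algorithm can be completed arbitrarily without loss.

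For the analysis, fix an optimal solution $(\phi^*,\pi^*)$. Call a satisfied edge \emph{surviving} if its endpoint in $L$ receives its color under $\phi^*$. The factor $\frac12$ is no longer needed. If that endpoint is uncolored, it receives the $\phi^*$ color with probability exactly $\frac1c$. If it is precolored, its color already agrees with $\phi^*$. Hence each optimal satisfied edge survives with probability at least $\frac1c$, and by linearity of expectation $\mathbb{E}[Z]\ge \OPT/c$.

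Now condition on the colors of $L$ and color $R$ according to $\phi^*$. This is a feasible basis of the matroid. Let $Z_t$ be the number of surviving edges of type $t$. Then $Z_t\le s_t$, since these edges are all satisfied in the global optimum. Moreover, this basis creates at least $Z_t$ edges of type $t$. So its $F$-value is at least $Z$, and hence the optimum of the submodular instance is at least $Z$. The continuous-greedy algorithm of~\cite{CalinescuEtAl2011} returns, conditionally, expected value at least $(1-\frac1e)Z$. Taking expectation over the coloring of $L$ then gives $\mathbb{E}[\ALG]\ge\frac{1-\frac1e}{c}\OPT$.

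There is no real obstacle here. The only point to check is that the step where the general proof used the crossing probability $\frac12$ is now deterministic, since every edge of a bipartite graph crosses $(V_1,V_2)$. This also covers edges between two precolored vertices: such an edge survives with probability $1$, which is consistent with the bound.
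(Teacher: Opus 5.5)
Your proposal is correct and follows the paper's proof essentially step for step. You fix the bipartition as $(L,R)$, randomly color the unprecolored vertices of $L$ to get $\mathbb{E}[Z]\geq\OPT/c$, show the second-stage submodular optimum is at least $Z$ by coloring $R$ according to $\phi^*$, and apply the $1-\frac{1}{e}$ guarantee conditionally before taking expectations. You only spell out details the paper leaves implicit, such as computing the bipartition and why coloring $R$ by $\phi^*$ is a valid basis.
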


\begin{proof}
Let $(L,R)$ be a fixed bipartition of the input graph and apply the construction from the proof of Theorem~\ref{thm:general-approx}, without the random partition step. Every unprecolored vertex in $L$ receives an independent uniformly random color, while precolored vertices keep their prescribed colors.

For every edge satisfied by a fixed optimum, its endpoint in $L$ receives its optimal color with probability at least $\frac{1}{c}$. Hence the expected number $Z$ of surviving optimal edges satisfies $\mathbb{E}[Z]\geq\frac{\OPT}{c}$. The same submodular argument as in Theorem~\ref{thm:general-approx} shows that the optimum second-stage value is at least $Z$. Applying the $1-\frac{1}{e}$ approximation therefore gives
$\mathbb{E}[\ALG]\geq\frac{1-\frac{1}{e}}{c}\OPT$.
\end{proof}

\section{Deterministic $\Theta(\frac{1}{c})$ Approximations for Fixed $c$}
\label{sec:fixed-c-approx}

The randomized algorithms of Section~\ref{sec:submodular-approx} run in polynomial time even when the number of colors is part of the input. We now consider the regime in which $c$ is fixed. In this case, once the coloring on one side of a cut is fixed, the remaining coloring choices can be optimized exactly over the crossing edges by dynamic programming. This yields deterministic approximation ratios $\frac{1}{2c}$ on general graphs and $\frac{1}{c}$ on bipartite graphs.

We first establish the completion procedure used by both algorithms.

\begin{lemma}
\label{lem:fixed-side-completion}
Let $V=L\cup R$ be a partition of the vertices of a \textsc{MaxMMP} instance, and suppose that the colors of all vertices in $L$ are fixed consistently with the precoloring. For fixed $c$, one can choose colors for the vertices in $R$ so as to maximize the number of satisfiable crossing edges in time $n^{O(c^2)}$.
\end{lemma}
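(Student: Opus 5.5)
We will solve this with a dynamic program over the vertices of $R$ whose state is the histogram of crossing-edge types produced so far. The key point is that once $L$ is colored, no two vertices of $R$ interact through the crossing edges. Each crossing edge $uv$ with $u\in L$, $v\in R$ has its type determined solely by the fixed color of $u$ and the color chosen for $v$. So, as in the proof of Theorem~\ref{thm:general-approx}, for each $v\in R$ and each allowed color $a$ we precompute the vector $w_{v,a}=(w_{v,a,t})_t$. Here $w_{v,a,t}$ counts the neighbours $u\in L$ of $v$ such that $\{\phi(u),a\}=t$. If $v$ is precolored, only its prescribed color $a$ is allowed. The histogram of crossing edges under a completion $\phi$ is then $h(\phi)=\sum_{v\in R} w_{v,\phi(v)}$. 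By Lemma~\ref{lem:histogram}, applied to the crossing edges only as in Theorem~\ref{thm:general-approx}, the number of satisfiable crossing edges is $\sum_t \min\{s_t,h_t(\phi)\}$.

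Order $R=\{v_1,\dots,v_r\}$ arbitrarily. Let $D_k$ be the set of histogram vectors $h\in\mathbb{Z}_{\ge 0}^{q}$, with $q=\binom{c+1}{2}$, that are achievable as $\sum_{i\le k} w_{v_i,a_i}$ for some allowed colors $a_1,\dots,a_k$. Start with $D_0=\{0\}$. Compute $D_k=\{h+w_{v_k,a}: h\in D_{k-1},\ a \text{ allowed at } v_k\}$. Store $D_k$ as a boolean table, keeping a back-pointer to recover the colors. At the end, return the $h\in D_r$ maximizing $\sum_t\min\{s_t,h_t\}$, and trace back the corresponding coloring of $R$. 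Correctness is immediate: $D_r$ is exactly the set of histograms realizable by completions of $R$, and the objective depends only on the histogram.

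For the running time, every coordinate of an achievable histogram is at most $m\le n^2$, since the coordinates sum to at most the number of crossing edges. So $|D_k|\le (m+1)^{q}\le n^{O(c^2)}$. Each of the $r\le n$ steps processes every state with at most $c$ colors, at $O(q)$ cost per update. Precomputing the vectors $w_{v,a}$ takes $O(nmc)$ time. The total is $n\cdot c\cdot q\cdot (m+1)^q=n^{O(c^2)}$ for fixed $c$.

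The only step needing care is this state-space bound: the exponent must be the number of stick types, $O(c^2)$, and not something depending on $n$. We could tighten it further by truncating each coordinate at $s_t$, since $\min\{s_t,h_t\}$ is insensitive to values above $s_t$. That truncation preserves correctness because the update is monotone coordinatewise, but it is not needed for the stated bound.
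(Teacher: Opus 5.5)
Your proposal is correct and follows the paper's proof essentially step for step: the same per-vertex type vectors $w(v,a)$, the same dynamic program over attainable crossing-edge histograms with at most $(m+1)^q$ states for $q=\binom{c+1}{2}$, and the same final application of Lemma~\ref{lem:histogram} to choose the best histogram. Your back-pointers for recovering the coloring and the optional truncation at $s_t$ (which is valid) are small refinements; the paper leaves the first implicit and does not use the second.
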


\begin{proof}
Let $q=\binom{c+1}{2}$ be the number of stick types. For every vertex $v\in R$ and every color $a$ allowed at $v$, define a vector $w(v,a)\in\mathbb{N}^q$, where the coordinate corresponding to type $t$ is the number of neighbors $u\in L$ for which $uv$ has type $t$ when $v$ receives color $a$.

Since the coloring of $L$ is fixed, every crossing edge is incident with exactly one vertex of $R$. Hence, if the vertices of $R$ receive colors $a_v$, the histogram of the crossing edges is exactly
$\displaystyle\sum_{v\in R}w(v,a_v)$.

We process the vertices of $R$ one by one and maintain all attainable crossing-edge histograms. Initially only the zero histogram is attainable. When processing a vertex $v$, for every currently attainable histogram $h$ and every color $a$ allowed at $v$, we add the histogram $h+w(v,a)$. If $v$ is precolored, only its prescribed color is considered.

Every coordinate of a histogram is at most $m$, so there are at most $(m+1)^q$ states. Since $q=\binom{c+1}{2}=O(c^2)$, the algorithm runs in time $n^{O(c^2)}$ for fixed $c$.

For every attainable histogram $h$, Lemma~\ref{lem:histogram} shows that the maximum number of crossing edges that can be satisfied is $\displaystyle\sum_t\min\{s_t,h_t\}$. Taking the maximum over all attainable histograms therefore gives an optimal coloring of $R$ with respect to the crossing edges.
\end{proof}

We use this lemma together with a small deterministic family of cuts. Let $r=\lceil\log_2 n\rceil$ and assign every vertex $v$ a distinct vector $x_v\in \{0,1\}^r$. For every $p\in{0,1}^r$, define
$L_p={v:\langle p,x_v\rangle=0}$ and $R_p=V\setminus L_p$, where the inner product is over $\mathbb{F}_2$. The family contains $2^r\leq2n$ cuts. Moreover, every pair of distinct vertices is separated by exactly half of them: for $u\neq v$, the vector $x_u+x_v$ is nonzero, and exactly half of the vectors $p$ satisfy $\langle p,x_u+x_v\rangle=1$.

\begin{theorem}
\label{thm:fixed-c-deterministic}
For every fixed $c$, \textsc{MaxMMP} admits a deterministic $\frac{1}{2c}$ approximation on general graphs in time $n^{O(c^2)}$, even with arbitrary precoloring.
\end{theorem}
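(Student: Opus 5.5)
The plan is to derandomize the proof of Theorem~\ref{thm:general-approx}. The random bipartition will be replaced by the family of $2^r\le 2n$ cuts $(L_p,R_p)$ defined above. The random coloring of $L$ will be replaced by trying, for each cut, every coloring of $L_p$ in which all unprecolored vertices of $L_p$ receive the \emph{same} color $a\in C$. For each of the at most $2n\cdot c$ pairs $(p,a)$, we fix the colors of $L_p$ (precolored vertices keep their prescribed color, and the rest get $a$). We then apply Lemma~\ref{lem:fixed-side-completion} to color $R_p$ optimally with respect to the crossing edges, and output the best solution found. Edges not crossing the cut are simply extra, and by Lemma~\ref{lem:histogram} the value of the full coloring is at least its value restricted to crossing edges. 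The running time is $O(cn)\cdot n^{O(c^2)}=n^{O(c^2)}$.

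For the analysis, fix an optimum $(\phi^*,\pi^*)$. Call a satisfied optimal edge $uv$ \emph{surviving} for $(p,a)$ if it crosses $(L_p,R_p)$ and its endpoint $w\in L_p$ satisfies $\phi^*(w)=$ the color assigned to $w$. This holds automatically if $w$ is precolored, and holds if $\phi^*(w)=a$ otherwise. As in the proof of Theorem~\ref{thm:general-approx}, coloring $R_p$ by $\phi^*$ gives, for each type $t$, at least $Z_t\le s_t$ crossing edges of type $t$, where $Z_t$ counts the surviving optimal edges of type $t$. Hence the optimum returned by Lemma~\ref{lem:fixed-side-completion} for $(p,a)$ is at least $Z(p,a)=\sum_t Z_t$.

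It then suffices to show $\sum_{p,a} Z(p,a)\ge \frac{2^r\cdot c}{2c}\OPT$, so that some pair achieves at least the average $\OPT/(2c)$. Fix a satisfied optimal edge $uv$. By the separation property, $uv$ crosses exactly $2^{r-1}$ of the cuts. For each such cut, its $L_p$-endpoint $w$ is determined. If $w$ is precolored, the edge survives for all $c$ choices of $a$. Otherwise it survives for exactly the one choice $a=\phi^*(w)$. Summing, each edge contributes at least $2^{r-1}$ to $\sum_{p,a}Z(p,a)$, which is $\frac{1}{2c}$ of the $2^r c$ pairs. Averaging then gives the bound.

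The step needing care is the replacement of independent uniform colors on $L$ by a single common color. This works because the surviving event for an edge depends on only one $L$-endpoint, so only the marginal of that one vertex's color matters. Averaging over $a$ therefore already gives exactly the needed factor $\frac1c$ for each edge, with no independence required. The pairwise separation property of the cut family handles the factor $\frac12$ in the same per-edge way.
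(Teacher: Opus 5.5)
Your proposal is correct and follows the paper's proof essentially step for step. It uses the same family of $2^r\le 2n$ cuts, the same candidates $(p,a)$ that give every unprecolored vertex of $L_p$ the single color $a$, the same optimal completion of $R_p$ via Lemma~\ref{lem:fixed-side-completion}, and the same per-edge averaging showing that each optimal satisfied edge survives for at least $2^{r-1}$ of the $2^r c$ candidates (your explicit note that non-crossing edges can only increase the value, by monotonicity of $\min\{s_t,\cdot\}$ in Lemma~\ref{lem:histogram}, fills in a detail the paper leaves implicit).
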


\begin{proof}
Consider every cut $(L_p,R_p)$ in the family above and every color $a\in C$. For the candidate $(p,a)$, keep the prescribed colors of all precolored vertices in $L_p$ and color every unprecolored vertex in $L_p$ with $a$. Then use Lemma~\ref{lem:fixed-side-completion} to choose the colors of $R_p$ optimally with respect to the crossing edges. Return the best solution over all choices of $p$ and $a$.

Fix an optimal solution $(\phi^*,\pi^*)$ of value $\OPT$, and let $S^*$ be the set of edges satisfied by this solution. For a candidate $(p,a)$, let $S_{p,a}\subseteq S^*$ contain those edges that cross $(L_p,R_p)$ and whose endpoint in $L_p$ receives the same color under the candidate as under $\phi^*$.

For every edge in $S_{p,a}$, coloring its endpoint in $R_p$ according to $\phi^*$ restores exactly the endpoint colors that it has in the optimal solution. Hence all edges of $S_{p,a}$ can simultaneously be satisfied using the sticks assigned to them by $\pi^*$. Since Lemma~\ref{lem:fixed-side-completion} finds an optimal completion of $R_p$, the value obtained for candidate $(p,a)$ is at least $|S_{p,a}|$.

Now fix an edge $uv\in S^*$. The endpoints $u$ and $v$ are separated by exactly half of the cuts in the family. For each such cut, consider the endpoint lying in $L_p$. If that endpoint is precolored, its color necessarily agrees with $\phi^*$ for every choice of $a$. If it is unprecolored, exactly one of the $c$ choices of $a$ agrees with its color under $\phi^*$. Thus, over all pairs $(p,a)$, the edge belongs to $S_{p,a}$ for at least a $\frac{1}{2c}$ fraction of the candidates.

Averaging over all optimal satisfied edges gives
$\frac{1}{2^r c}\displaystyle\sum_{p,a}|S_{p,a}|\geq\frac{\OPT}{2c}$.
Therefore some candidate satisfies $|S_{p,a}|\geq\frac{\OPT}{2c}$, and the solution returned by the algorithm has value at least $\frac{\OPT}{2c}$.

There are at most $2nc$ candidates, and each is solved in time $n^{O(c^2)}$ by Lemma~\ref{lem:fixed-side-completion}. The total running time is therefore $n^{O(c^2)}$ for fixed $c$.
\end{proof}

On bipartite graphs, the family of cuts is unnecessary because the given bipartition already separates every edge.

\begin{corollary}
\label{cor:fixed-c-bipartite}
For every fixed $c$, \textsc{MaxMMP} admits a deterministic $\frac{1}{c}$ approximation on bipartite graphs in time $n^{O(c^2)}$, even with arbitrary precoloring.
\end{corollary}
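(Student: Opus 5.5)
We follow the proof of Theorem~\ref{thm:fixed-c-deterministic}, replacing the family of cuts with the single bipartition of the input graph. First we compute a bipartition $(L,R)$ of $G$ in linear time, for example by BFS on each connected component; every edge of $G$ then crosses $(L,R)$. For each color $a\in C$ we form a candidate coloring of $L$: precolored vertices of $L$ keep their prescribed colors, and every unprecolored vertex of $L$ receives color $a$. Lemma~\ref{lem:fixed-side-completion} then gives an optimal coloring of $R$ with respect to the crossing edges, which here are all the edges. We return the best of the $c$ resulting solutions. Their values are computed through Lemma~\ref{lem:histogram}, which also supplies the bijection $\pi$.

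For the analysis, fix an optimal solution $(\phi^*,\pi^*)$ and let $S^*$ be the set of edges it satisfies. For each $a$, let $S_a\subseteq S^*$ be the edges whose endpoint in $L$ receives the same color under candidate $a$ as under $\phi^*$. We would like to color $R$ according to $\phi^*$, but the lemma optimizes over colorings of $R$, so this must be a valid choice. It is, because $\phi^*$ respects the precoloring on $R$. Under this coloring every edge of $S_a$ has exactly its optimal endpoint colors. The sticks $\pi^*$ assigns to these edges are distinct, so all of $S_a$ can be satisfied simultaneously. Hence the optimal completion of $R$ has value at least $|S_a|$.

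It remains to average over $a$. Take an edge $uv\in S^*$ with $u\in L$. If $u$ is precolored, then $uv\in S_a$ for all $c$ choices of $a$. Otherwise $uv\in S_a$ exactly when $a=\phi^*(u)$, which is one choice of $a$. So $\sum_a |S_a|\geq |S^*|=\OPT$, and some $a$ satisfies $|S_a|\geq \OPT/c$.

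The running time is $c$ calls to Lemma~\ref{lem:fixed-side-completion}, for a total of $n^{O(c^2)}$ when $c$ is fixed. We do not expect a real obstacle. The only points needing care are the two noted above: coloring $R$ by $\phi^*$ must respect the precoloring, and the lemma must count all edges because every edge crosses $(L,R)$.
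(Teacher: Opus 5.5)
Your proposal is correct and matches the paper's proof: you take the fixed bipartition, try each color $a$ for the unprecolored vertices of $L$, complete $R$ optimally via Lemma~\ref{lem:fixed-side-completion}, and average $\sum_a |S_a|\geq\OPT$. The two details you flag (that coloring $R$ by $\phi^*$ respects the precoloring, and that every edge crosses the bipartition) are left implicit in the paper's version.
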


\begin{proof}
Let $(L,R)$ be a bipartition of the input graph. For every color $a\in C$, keep the prescribed colors of the precolored vertices in $L$, color every unprecolored vertex in $L$ with $a$, and apply Lemma~\ref{lem:fixed-side-completion} to optimize the coloring of $R$.

Fix an optimal solution and its set $S^*$ of satisfied edges. For each $a\in C$, let $S_a\subseteq S^*$ contain the edges whose endpoint in $L$ agrees with its color in the optimum after the above coloring of $L$. If this endpoint is precolored, the edge belongs to every $S_a$; otherwise, it belongs to exactly one $S_a$. Consequently,
$\displaystyle\sum_{a\in C}|S_a|\geq\OPT$,
and hence some color $a$ satisfies $|S_a|\geq\frac{\OPT}{c}$.

Coloring $R$ according to the global optimum would satisfy all edges in $S_a$, so the optimal completion returned by Lemma~\ref{lem:fixed-side-completion} has value at least $|S_a|$. Taking the best of the $c$ candidates therefore gives a deterministic $\frac{1}{c}$-approximation.
\end{proof}

Thus, for fixed $c$, exact optimization of the second-stage coloring improves the randomized ratios of Section~\ref{sec:submodular-approx} while also removing randomization.

\section{Exact Algorithms on Trees and Cographs for Fixed $c$}
\label{sec:exact-structured}

We now turn from approximation to exact optimization on structured graph classes. For every fixed number of colors $c$, \textsc{MaxMMP} can be solved exactly in time $n^{O(c^2)}$ on trees and on cographs, even with arbitrary precoloring.

Throughout this section, let
$q=\binom{c+1}{2}$
denote the number of unordered stick types. For a coloring of a subgraph, we represent the multiplicities of its edge types by a histogram $h\in\{0,\ldots,m\}^q$.

\subsection{Trees}

Let $T$ be a tree rooted at an arbitrary vertex $r$. For a vertex $v$, let $T_v$ denote the subtree rooted at $v$.

For every vertex $v$, color $a\in C$ allowed at $v$, and histogram $h$, define a state
$D_v(a,h)$
which is true if and only if $T_v$ admits a coloring extending the precoloring such that $v$ receives color $a$ and the edges of $T_v$ realize exactly the histogram $h$.

\begin{theorem}
\label{thm:tree-exact}
For every fixed $c$, \textsc{MaxMMP} can be solved exactly on trees in time $n^{O(c^2)}$, even with arbitrary precoloring.
\end{theorem}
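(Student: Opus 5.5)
We compute the Boolean table $D_v(a,h)$ bottom-up over the rooted tree $T$, and then read off the optimum with Lemma~\ref{lem:histogram}. For a leaf $v$, $D_v(a,h)$ is true exactly when $a$ is allowed at $v$ and $h=0$. For an internal vertex $v$ with children $u_1,\ldots,u_k$, we fix an allowed color $a$ and process the children one at a time. We keep an auxiliary set $P_i$ of the histograms attainable by the subtree consisting of $v$ together with $T_{u_1},\ldots,T_{u_i}$, with $v$ colored $a$. We start with $P_0=\{0\}$. To pass from $P_i$ to $P_{i+1}$, we take every $h\in P_i$, every color $b$ allowed at $u_{i+1}$, and every $h'$ with $D_{u_{i+1}}(b,h')$ true, and insert $h+h'+\mathbf{1}_{\{a,b\}}$. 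Here $\mathbf{1}_{\{a,b\}}$ is the unit vector of the type of the edge $vu_{i+1}$. We discard any histogram whose total exceeds $m$; this never discards a realizable one. Finally, $D_v(a,h)$ is true if and only if $h\in P_k$.

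Correctness is by induction on the height of $v$. The edge set of $T_v$ is the disjoint union of the edges $vu_i$ and the edge sets of the subtrees $T_{u_i}$. A coloring of $T_v$ extending the precoloring with $v$ colored $a$ is the same thing as $v\mapsto a$ together with independent colorings of each $T_{u_i}$ extending the precoloring, each with some root color $b_i$. Its histogram is then the sum of the subtree histograms plus the unit vectors $\mathbf{1}_{\{a,b_i\}}$. This matches the recurrence exactly, so $P_k$ is the set of realizable histograms in both directions.

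At the root $r$, every final coloring $\phi$ extending $\mathcal C_0$ has histogram $h$ with $D_r(\phi(r),h)$ true, and conversely every true state comes from such a coloring. By Lemma~\ref{lem:histogram}, the optimum is therefore
\[
\OPT=\max\Big\{\sum_t \min\{s_t,h_t\} : D_r(a,h)\text{ true}\Big\}.
\]
To output a solution rather than only its value, we store back-pointers in the combination step to recover a coloring $\phi$. We then build the bijection $\pi$ greedily as in the proof of Lemma~\ref{lem:histogram}.

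The main point to check is the running time. Every histogram lies in $\{0,\ldots,m\}^q$ with $q=O(c^2)$ and $m=n-1$, so each $P_i$ and each table $D_u(b,\cdot)$ has at most $n^{O(c^2)}$ entries. A single merge step pairs at most $n^{O(c^2)}\cdot c\cdot n^{O(c^2)}$ combinations. There are $n-1$ merge steps, one per tree edge, each repeated for the $c$ colors $a$. The total is $n^{O(c^2)}$ for fixed $c$. Membership in $P_i$ can be handled with a Boolean array indexed by histograms, so set operations add only polynomial overhead.
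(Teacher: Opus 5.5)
Your proposal is correct and follows essentially the same route as the paper: a bottom-up dynamic program over the states $D_v(a,h)$ that merges children one at a time, adding the unit vector of the type of the edge $vu$, and then reads off the optimum at the root via Lemma~\ref{lem:histogram}. Your explicit induction for correctness, the per-merge running-time count, and the back-pointer recovery of $(\phi,\pi)$ fill in details that the paper leaves implicit; they do not change the approach.
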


\begin{proof}
We compute the states $D_v(a,h)$ bottom-up.

If $v$ is a leaf, then the only realizable histogram is the zero histogram. Thus $D_v(a,0)$ is true for every color $a$ allowed at $v$.

Now let $v$ have children $u_1,\ldots,u_d$, and fix an allowed color $a$ for $v$. We combine the children one at a time. Suppose that for a child $u$, the state $D_u(b,h')$ is true. If $u$ receives color $b$, then the edge $vu$ contributes one additional edge of type ${a,b}$. Hence this child contributes the histogram
$h'+\mathbf{e}*{{a,b}}$,
where $\mathbf{e}*{{a,b}}$ is the unit vector corresponding to type ${a,b}$.

Starting with the zero histogram, we repeatedly add one realizable contribution from each child. After all children have been processed, the resulting histograms are exactly those realizable in $T_v$ with $v$ colored $a$.

Each histogram has $q$ coordinates, each between $0$ and $m$, so there are at most $(m+1)^q$ possible histograms. Since there are $c$ possible colors for the root of each subtree, the number of states per vertex is at most $c(m+1)^q$. As $q=\binom{c+1}{2}=O(c^2)$, the algorithm runs in time $n^{O(c^2)}$ for fixed $c$.

At the root $r$, every realizable coloring of the whole tree is represented by some state $D_r(a,h)$. By Lemma~\ref{lem:histogram}, the maximum number of satisfied edges for such a coloring is
$\displaystyle\sum_t\min\{s_t,h_t\}$.
Taking the maximum over all true states $D_r(a,h)$ gives the optimum value of the instance.
\end{proof}

\subsection{Cographs}

A cograph can be constructed recursively from single vertices by disjoint union and complete join. We use a cotree representing this recursive construction.

For a node $x$ of the cotree, let $G_x$ denote the subgraph represented by the subtree rooted at $x$. In addition to the edge-type histogram, we must record the number of vertices of each color, because at a join node all edges between the two child subgraphs are introduced.

For a vector $\alpha=(\alpha_1,\ldots,\alpha_c)$ and a histogram $h$, define
$D_x(\alpha,h)$
to be true if and only if $G_x$ admits a coloring extending the precoloring such that exactly $\alpha_i$ vertices receive color $i$ and the edges of $G_x$ realize exactly the histogram $h$.

\begin{theorem}
\label{thm:cograph-exact}
For every fixed $c$, \textsc{MaxMMP} can be solved exactly on cographs in time $n^{O(c^2)}$, even with arbitrary precoloring.
\end{theorem}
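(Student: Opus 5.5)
We compute the states $D_x(\alpha,h)$ bottom-up over the cotree. This requires first obtaining a cotree in linear time by the standard cograph recognition algorithms, and then, if needed, binarizing it so that every internal node has exactly two children. Binarization is harmless because union and join are associative.

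At a leaf $x$ representing a vertex $v$, the graph $G_x$ has no edges. Hence $D_x(\alpha,0)$ is true exactly when $\alpha$ is the unit vector $\mathbf{e}_a$ for some color $a$ allowed at $v$: either the prescribed color, or any color if $v$ is unprecolored. All other states at the leaf are false.

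At a union node $x$ with children $y,z$, no new edges are created. Therefore $D_x(\alpha,h)$ is true if and only if there exist true states $D_y(\alpha',h')$ and $D_z(\alpha'',h'')$ with $\alpha=\alpha'+\alpha''$ and $h=h'+h''$. The colorings of the two children are independent because their vertex sets are disjoint, so this characterization is correct in both directions.

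At a join node, every vertex of $G_y$ becomes adjacent to every vertex of $G_z$. The key observation is that the histogram of these new edges depends only on the color counts of the two sides. The number of new edges of type $\{i,j\}$ with $i\neq j$ is $\alpha'_i\alpha''_j+\alpha'_j\alpha''_i$, and the number of type $\{i,i\}$ is $\alpha'_i\alpha''_i$. Writing $J(\alpha',\alpha'')$ for this vector, we get that $D_x(\alpha,h)$ is true if and only if there exist true child states with $\alpha=\alpha'+\alpha''$ and $h=h'+h''+J(\alpha',\alpha'')$. Correctness follows exactly as before: every coloring of $G_x$ restricts to colorings of $G_y$ and $G_z$, and conversely any pair of child colorings combines to a coloring of $G_x$. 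The vectors $\alpha',\alpha''$ are determined by these colorings, so the cross-edge contribution is realized exactly. Any combined histogram with a coordinate exceeding $m$ cannot arise and can be discarded.

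For the running time, we count states per node. There are at most $(n+1)^c$ vectors $\alpha$ and at most $(m+1)^q$ histograms, so each node has $n^{O(c)}\cdot n^{O(c^2)}=n^{O(c^2)}$ states. Combining two children means iterating over pairs of true child states, which takes $(n^{O(c^2)})^2=n^{O(c^2)}$ time per node. With $O(n)$ cotree nodes, the total is $n^{O(c^2)}$.

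At the root $\rho$, every coloring of $G$ extending the precoloring appears as some true state $D_\rho(\alpha,h)$. By Lemma~\ref{lem:histogram}, the optimum is therefore $\max\sum_t\min\{s_t,h_t\}$ over all true root states. An optimal coloring, and from it the stick assignment, can be recovered by storing back-pointers.

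The main point requiring care is the join step. It is the one place where the vertex-count vector $\alpha$ is indispensable, and we must verify that $J$ is determined by $\alpha',\alpha''$ alone. This holds because in a complete join every color-$i$ vertex on one side meets every color-$j$ vertex on the other.
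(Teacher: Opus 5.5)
Your proposal is correct and follows essentially the same route as the paper: a bottom-up dynamic program over a binary cotree with states $D_x(\alpha,h)$, where the join contribution is determined by the color-count vectors alone via $\alpha'_i\alpha''_i$ and $\alpha'_i\alpha''_j+\alpha'_j\alpha''_i$, followed by maximizing $\sum_t\min\{s_t,h_t\}$ over the true root states using Lemma~\ref{lem:histogram}. Your extra details (linear-time cotree construction, binarization, back-pointers for recovering the coloring and stick assignment) are left implicit in the paper and do not change the argument.
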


\begin{proof}
We compute the states bottom-up over a cotree.

At a leaf corresponding to a vertex $v$, for every color $i$ allowed at $v$, the state with $\alpha_i=1$, all other entries of $\alpha$ equal to $0$, and zero edge histogram is realizable.

Consider an internal node $x$ with children $y$ and $z$.

If $x$ is a disjoint-union node, no edges are added between $G_y$ and $G_z$. Therefore, if $D_y(\alpha,h)$ and $D_z(\beta,h')$ are true, then
$D_x(\alpha+\beta,h+h')$
is true.

Now suppose that $x$ is a join node. Then every vertex of $G_y$ is adjacent to every vertex of $G_z$. Again combine states $D_y(\alpha,h)$ and $D_z(\beta,h')$. The new color-count vector is $\alpha+\beta$. The internal edge histograms contribute $h+h'$, and the join edges contribute a histogram $g$ determined entirely by $\alpha$ and $\beta$.

For every color $i$, the number of new edges of type $\{i,i\}$ is
$g_{ii}=\alpha_i\beta_i$.
For distinct colors $i<j$, the number of new edges of type $\{i,j\}$ is
$g_{ij}=\alpha_i\beta_j+\alpha_j\beta_i$.
Hence the resulting histogram is $h+h'+g$.

There are at most $(n+1)^c$ possible color-count vectors and at most $(m+1)^q$ possible histograms. Thus the number of states at each cotree node is at most
$(n+1)^c(m+1)^q=n^{O(c^2)}$
for fixed $c$. Combining pairs of states still gives an overall running time of $n^{O(c^2)}$.

At the root of the cotree, each realizable coloring of the whole graph corresponds to some state $D_x(\alpha,h)$. By Lemma~\ref{lem:histogram}, its optimum stick assignment satisfies
$\displaystyle\sum_t\min\{s_t,h_t\}$
edges. Maximizing this quantity over all realizable root states yields the optimum value of the \textsc{MaxMMP} instance.
\end{proof}

As an immediate consequence, for every fixed $c$, \textsc{FMMP} can be solved in time $n^{O(c^2)}$ on trees and cographs, even with arbitrary precoloring, since feasibility is equivalent to realizing exactly the prescribed stick histogram.

\section{Conclusion}
\label{sec:conclusion}

We initiated the study of \textsc{MaxMMP}, the optimization variant of the Matching-Match Puzzle. We proved APX-hardness even with two colors, no precolored vertices, and only bichromatic sticks, via an objective-preserving reduction from \textsc{Max-Cut}.

On the algorithmic side, we gave a simple deterministic $\frac{2}{c(c+1)}$-approximation for completely uncolored instances and randomized $\Theta(\frac{1}{c})$-approximations for arbitrary precoloring: $\frac{1-\frac{1}{e}}{2c}$ on general graphs and $\frac{1-\frac{1}{e}}{c}$ on bipartite graphs. For fixed $c$, we further obtained deterministic $\frac{1}{2c}$- and $\frac{1}{c}$-approximations on general and bipartite graphs, respectively, in time $n^{O(c^2)}$. We also gave exact $n^{O(c^2)}$-time algorithms on trees and cographs, which immediately yield the same running time for \textsc{FMMP} on these classes.

A main open question is whether the $\Theta(\frac{1}{c})$ dependence is necessary, or whether a constant-factor approximation independent of $c$ is possible. It would also be interesting to determine tighter bounds for small fixed $c$ and to extend the exact algorithms to broader graph classes.

\bibliographystyle{splncs04}
\bibliography{bibl_cocoa}

\end{document}